\documentclass[11pt]{article}
\usepackage{geometry}
\geometry{verbose,letterpaper,tmargin=21mm,bmargin=22mm,lmargin=21mm,rmargin=22mm}
\usepackage{amssymb}
\usepackage{mathrsfs}
\usepackage{amsmath}

\newtheorem{theorem}{Theorem}[section]

\newtheorem{lemma}[theorem]{Lemma}
\newtheorem{proposition}[theorem]{Proposition}
\newtheorem{claim}[theorem]{Claim}
\newtheorem{definition}[theorem]{Definition}

\def\squarebox#1{\hbox to #1{\hfill\vbox to #1{\vfill}}}
\newcommand{\qed}{\hspace*{\fill}
\vbox{\hrule\hbox{\vrule\squarebox{.667em}\vrule}\hrule}\smallskip}

\newenvironment{proof}{\noindent{\bf Proof:~~}}{\(\qed\)}

\newcommand{\ignore}[1]{}


\begin{document}
\title{Truthfulness via Proxies}

\author{Shahar Dobzinski \\
  Department of Computer Science\\
  Cornell Unversity\\
  {\tt shahar@cs.cornell.edu}
  \and Hu Fu \\
  Department of Computer Science\\
  Cornell Unversity\\
  {\tt hufu@cs.cornell.edu}
  \and Robert Kleinberg\\
  Department of Computer Science\\
  Cornell Unversity\\
  {\tt rdk@cs.cornell.edu}}
\maketitle

\begin{abstract}
This short note exhibits a truthful-in-expectation $O(\frac {\log m} {\log \log m})$-approximation mechanism for combinatorial auctions with subadditive bidders that uses polynomial communication.
\end{abstract}

\section{Introduction}

We consider the problem of maximizing the social welfare in combinatorial auctions with subadditive bidders. In this problem, we have a set $M$, $|M|=m$, of heterogeneous items, and $n$ bidders. Each bidder $i$ has a valuation function $v_i$, $v_i:2^M\rightarrow \mathbb R$. We assume that the each valuation $v_i$ is normalized ($v_i(\emptyset)=0$), non-decreasing, and subadditive: for each two bundles $S$ and $T$, $v_i(S)+v_i(T)\geq v_i(S\cup T)$. This important class of valuations contains other interesting classes. In particular, it strictly contains the class of submodular valuations. Our goal is to find an allocation of the items $(S_1,\ldots, S_n)$ that maximizes the welfare: $\Sigma_iv_i(S_i)$. The efficiency of our algorithms will be measured in terms of the natural parameters of the problem: $n$ and $m$. Since the valuation functions are objects of exponential size, we assume that they are represented by black boxes that can answer any type of queries. In particular, we assume that bidders are computationally unbounded and measure only the amount of communication between them.

Feige \cite{F06} obtains a $2$-approximation algorithm by applying an ingenious randomized rounding. This is the best possible \cite{DNS05} in polynomial communication. Much research was concerned with the design of truthful algorithms for the problem: algorithms in which a profit-maximizing strategy for each bidder is to truthfully answer the queries. The simplest and strongest notion considered is deterministic truthfulness, when no randomization is allowed. The VCG mechanism is truthful and optimally solves the problem, but it is not computationally efficient. The best truthful approximation algorithm known achieves a poor approximation ratio of $O(\sqrt m)$ \cite{DNS05}, which is achieved by a maximal-in-range algorithm (see below). An evidence that deterministic algorithms cannot achieve an improved approximation ratio was given in \cite{DN07a}: maximal-in-range algorithms cannot achieve an approximation ratio better than $m^{\frac 1 6}$ with polynomial communication.

If randomization is allowed, there exists a universally truthful mechanism (a distribution over deterministic truthful mechanisms) that guarantees an approximation ratio of $O(\log m\log \log m)$ \cite{D07}. This paper relaxes the solution concept and considers mechanisms that are \emph{truthful in expectation}: truth telling maximizes the \emph{expected} profit of each bidder, where the expectation is taken over the internal random coins of the algorithm. We stress that truthful-in-expectation is much weaker than universal truthfulness: in particular, bidders that are not risk neutral may not be incentivized to bid truthfully. See \cite{DNS06} for a discussion.

In this note we show that there exists a truthful-in-expectation $O(\frac {\log m} {\log \log m})$-approximation mechanism that uses only polynomial amount of communication. While this only slightly improves the best approximation ratio provided by universally truthful algorithms (at the cost of weakening the solution concept), we feel that this result is of interest for two main reasons (1) the random-sampling based techniques of \cite{D07} do not seem capable of achieving a better than a logarithmic factor, so this result might hint at a performance gap between truthful-in-expectation and universally truthful mechanisms in our setting, and (2) our rounding technique is non-trivial and may be of independent interest.

Our algorithm is described in the communication complexity model. The model is best suited for proving strong, unconditional impossibility results. In particular, for algorithms it should not take advantage of the unlimited computational power of the bidders. Indeed, essentially all known algorithms known for combinatorial auctions need polynomially bounded computation power and use a restricted and natural communication form, e.g., demand queries (given prices per item $p_1,\ldots ,p_m$ return a bundle that maximizes the profit). Unfortunately, we do not know how to implement our algorithm using demand queries, or any other type of natural query. In a sense, this is an abuse of the communication complexity model; hence we recommend viewing our result as demonstrating the falsehood of certain strong impossibility results for the problem, rather than as a ``real'' mechanism suitable for use in combinatorial auctions.

Let us briefly discuss the techniques we use. Almost all truthful deterministic mechanisms known in the literature are \emph{maximal-in-range}, a scaled-down version of the classic VCG mechanism: for every set of valuations, select the welfare-maximizing allocation in a \emph{predefined} set of allocations. However, as mentioned above, \cite{DN07a} shows that this method can only guarantee a poor approximation ratio in our setting. One way to overcome this obstacle was spelled out by \cite{DD09} which suggested to use \emph{maximal in distributional range} (MIDR) mechanisms: for every set of valuations, select the welfare-maximizing distribution in a preselected set of \emph{distributions over allocations}, then sample an allocation from this distribution. This results in a truthful-in-expectation mechanism. Furthermore, in \cite{DD09} it is shown that for some problems MIDR mechanisms may be more powerful than any universally truthful mechanism.

Our mechanism is also MIDR\footnote{In fact, our range will consist of distributions over \emph{infeasible} allocations, but we will always output a distribution over feasible allocations with the same expected welfare as the best distribution in the range. Hence the mechanism is \emph{equivalent} to MIDR (see \cite{DN07a}), and truthfulness in expectation follows.}. The basic idea is to represent each bidder by a proxy bidder, find the optimal fractional solution among these proxy bidders, and then round the fractional solution. Each proxy bidder is defined so that he ``simulates'' the expected value of the bundle after the randomized rounding. The main obstacle is to prove feasibility while still being able to relate the value of the rounded solution (that was calculated with respect to the proxy bidders) to the original bidders.

We note that Lavi and Swamy \cite{LS05} already implicitly used maximal-in-distributional-range algorithms together with the LP relaxation of the problem. Our solution requires a subtler rounding of the LP, and in particular overcomes one of the main barriers of \cite{LS05}: their decomposition is based on an algorithm that ``verifies'' the integrality gap, and thus they can only provide a truthful-in-expectation $O(\sqrt m)$-approximation mechanism for our setting. Our mechanism uses a direct approach to ``decompose'' a linear program with proxy bidders. This is one of the main reasons for our success in guaranteeing a better approximation ratio.

The main open question we leave open is the existence of a constant-ratio truthful-in-expectation mechanism for combinatorial auctions with subadditive bidders. A first step in this direction might be to prove that no polynomial time MIDR mechanism can achieve an $O(1)$ approximation in polynomial time.

\section{The Mechanism}

\subsection{The Range}

We first remind the reader the linear program relaxation of the problem:

\noindent  \emph{Maximize:}
  $\Sigma_{i,S}x_{i,S}v_i(S)$

\noindent  \emph{Subject to:}
  \begin{itemize}
      \item For each item $j$: $\Sigma_{i,S|j\in S}x_{i,S}\leq 1$
      \item for each bidder $i$: $\Sigma_{S}x_{i,S}\leq 1$
      \item for each $i$, $S$: $x_{i,S}\geq 0$
  \end{itemize}

\begin{definition}
A distribution $D$ over (not necessarily feasible) allocations is called \emph{$(c,p)$-fractional} if there exists a feasible fractional solution to the LP such that $D$ is equal to the distribution produced by the following process: with probability $p$ no bidder is allocated any item. With probability $1-p$ each bidder $i$ receives exactly one bundle $S$ with probability $x_{i,S}$ and keeps each item $j\in S$ with probability $c$, independently at random.
\end{definition}

\begin{definition}[the range]
The range $\mathcal R_{c,p}$ consists of all $(c,p)$-fractional distributions.
\end{definition}

Before proving that there is one distribution in $\mathcal R_{c,p}$ that provides a good approximation, we require the following definition:

\begin{definition}
Given a valuation $v$, let the \emph{$c$-proxy valuation} $v'$ be:
\[
v'(S)=E_{T\sim\mathcal D_S}[v(T)]
\]
where $\mathcal D$ is the distribution where each $j\in S$ is in $T$ with probability $c$ independently at random.
\end{definition}

\begin{lemma}
The optimal distribution in a $\mathcal R_{c,p}$ range is an $O(\frac {1} {c\cdot p})$-approximation to the optimal social welfare if the valuations are subadditive.
\end{lemma}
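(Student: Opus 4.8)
The plan is to exhibit a single distribution $D\in\mathcal{R}_{c,p}$ whose expected welfare, measured against the \emph{true} valuations $v_i$, is $\Omega\bigl(c(1-p)\bigr)$ times the optimal social welfare; since the optimal distribution in the range does at least as well, and since in the regime of interest $p$ is bounded away from $1$, this yields the claimed $O(1/(cp))$ bound. Let $(O_1,\dots,O_n)$ be a welfare-maximizing allocation and $\mathrm{OPT}=\sum_i v_i(O_i)$. Take the fractional solution that sets $x_{i,O_i}=1$ for every $i$ and all other variables to $0$; it is feasible because the $O_i$ are pairwise disjoint. Let $D$ be the $(c,p)$-fractional distribution this solution induces. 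Unwinding the definition, the expected welfare of $D$ is $(1-p)\sum_i E_{T\sim\mathcal{D}_{O_i}}[v_i(T)]=(1-p)\sum_i v_i'(O_i)$, where $v_i'$ is the $c$-proxy valuation of $v_i$ — and it is precisely the definition of the proxy valuation that lets us rewrite the true expected welfare this way.

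The heart of the argument is a claim about subadditive functions: for every monotone subadditive $v$ and every set $O$, $E_{T\sim\mathcal{D}_O}[v(T)]\ge \tfrac{c}{2}\,v(O)$, where $T$ keeps each element of $O$ independently with probability $c$. I would prove this by a random-partition argument. Set $k=\lceil 1/c\rceil$, color each element of $O$ with one of $k$ colors independently and uniformly, and let $O^{(\ell)}$ be the elements of color $\ell$. Subadditivity gives $v(O)\le\sum_{\ell=1}^{k}v(O^{(\ell)})$, and since the $O^{(\ell)}$ are identically distributed, taking expectations gives $E[v(O^{(1)})]\ge v(O)/k$. Since $O^{(1)}$ includes each element of $O$ independently with probability $1/k\le c$, couple it with $T$ so that $O^{(1)}\subseteq T$; monotonicity then gives $E[v(T)]\ge E[v(O^{(1)})]\ge v(O)/k\ge \tfrac{c}{2}v(O)$, using $\lceil 1/c\rceil\le 2/c$.

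Finishing is then routine: applying the claim to each $O_i$ and summing gives $\sum_i v_i'(O_i)\ge\tfrac{c}{2}\mathrm{OPT}$, hence $D$ has expected welfare at least $\tfrac{c(1-p)}{2}\mathrm{OPT}$, which is $\Omega(cp)\cdot\mathrm{OPT}$. (One could equally start from an optimal \emph{fractional} solution instead of $(O_1,\dots,O_n)$ and apply the claim term by term to each bundle in its support; this costs nothing extra.)

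I expect the only real obstacle to be the subadditive-functions claim — it is the single place where subadditivity, as opposed to mere monotonicity, is used, and getting a clean constant requires the coupling step to handle values of $c$ for which $1/c$ is not an integer. Everything else is definitional bookkeeping; the one subtlety worth stating explicitly is that $D$'s welfare must be evaluated with the original $v_i$, not the proxies $v_i'$, and that the two coincide in expectation by construction.
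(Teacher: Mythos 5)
Your proposal is correct and follows essentially the same route as the paper: plug the optimal integral allocation into the range and lower-bound the resulting welfare via the fact that the $c$-proxy valuation of a subadditive function retains an $\Omega(c)$ fraction of the value, which the paper simply cites as a proposition from Feige (stated there under the assumption that $1/c$ is an integer). The only difference is that you re-prove that proposition yourself with the random-coloring/coupling argument, which is essentially Feige's own proof and has the minor added benefit of handling non-integer $1/c$ at a factor-$2$ loss.
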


The proof of this lemma immediately follows by considering the optimal allocation and using the following proposition from \cite{F06}:
\begin{proposition}[paraphrasing \cite{F06}]
Let $S$ be a bundle, $v$ a subadditive valuation, and $c$ such that $0\leq c\leq 1$ and $\frac 1 c$ is an integer. Then, $v'(S)\geq c\cdot v(S)$, where $v'$ is the $c$-proxy valuation of $v$.
\end{proposition}

\subsection{The Algorithm}

\begin{enumerate}
\item For each valuation $v_i$ of bidder $i$, let $v'_i$ be the $c$-proxy valuation.
\item\label{step-LP} Solve the linear program relaxation of the problem with respect to the \emph{$c$-proxy valuations} $v'_i$.

\item\label{step-tentative} Each bidder $i$ is tentatively assigned exactly one bundle $S_i$, where bundle $S$ is allocated to $i$ with probability exactly $x_{i,S}$. If there is an item that is allocated more than $\frac 1 c$ times, the algorithm halts and no bidder is allocated any items. Otherwise, proceed to the next steps. 
    
\item For each bidder $i$, let $q_i$ be the probability that some item in $S_i$ was allocated more than $\frac 1 c-1$ times in the following random experiment: each bidder $i'$, $i'\neq i$, is allocated bundle $S$ with probability $x_{i,S}$.

\item\label{step-item-lottery} Independently for each item $j\in S$, select at most one bidder to receive $j$, so that each bidder that is tentatively allocated $S_i$ such that $j\in S_i$ receives item $j$ with probability exactly $c$.

\item\label{step-personal-cancel} For each bidder $i$, with probability $1-\frac p {1-q_i}$ he is not allocated any items, and with probability $\frac p {1-q_i}$ he keeps the items that he was assigned in the previous step.
\end{enumerate}

We note that the last step of the algorithm is a re-implementation of the main idea behind the truthful-in-expectation mechanism for single-minded bidders of \cite{APTT03}.

In our proofs we assume $m$ is large enough (when $m$ is a constant VCG can be implemented in polynomial time in $n$). Also, from now on we fix and $c=\frac {\log \log m} {100 \log m}$ and $p=\frac 1 {20}$. We first have to make sure that the algorithm is correctly defined. For that we have to prove that $p\geq 1-q_i$, for every $q_i$. This follows from the next claim \cite{F06,DNS05}:
\begin{claim}[\cite{F06}]
Fix any feasible solution of the linear program. Allocate each bidder $i$ exactly one bundle where each bundle $S$ is allocated with probability $x_{i,S}$. The probability that no item is allocated more than $\frac 1 c$ times is at least $1-\frac 1 m$.
\end{claim}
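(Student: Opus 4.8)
The plan is to bound the over-allocation probability for a single item and then take a union bound over all $m$ items. Fix an item $j$. For each bidder $i$, let $Y_i$ be the indicator of the event that the bundle tentatively drawn for $i$ contains $j$; then $\Pr[Y_i=1]=\sum_{S\ni j}x_{i,S}$. The essential structural fact is that the bundles drawn by distinct bidders are independent, so $Y_1,\dots,Y_n$ are independent Bernoulli variables and $X_j:=\sum_i Y_i$ is the number of copies of $j$ that get handed out. The item constraint of the linear program yields $\mathbb{E}[X_j]=\sum_{i,\,S\ni j}x_{i,S}\le 1$. Note that we never need independence across items: handling each item separately and combining with a union bound suffices.

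I would then invoke a multiplicative Chernoff bound for the independent sum $X_j$. Using the standard tail estimate $\Pr[X_j\ge k]\le (e\,\mathbb{E}[X_j]/k)^k\le (e/k)^k$, valid for any $k\ge \mathbb{E}[X_j]$, and taking $k=1/c=100\log m/\log\log m$, the probability that item $j$ is allocated more than $1/c$ times is at most $(ec)^{1/c}$.

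The remaining, and only nonroutine, step is the asymptotic estimate $(ec)^{1/c}\le m^{-2}$ for $c=\log\log m/(100\log m)$ and $m$ large. Taking logarithms, $\tfrac1c\ln(ec)=\tfrac{100\log m}{\log\log m}\bigl(1-\ln 100+\ln\log\log m-\ln\log m\bigr)$. Inside the parenthesis the term $-\ln\log m$ dominates, and since $\ln\log m=\Theta(\log\log m)$ this makes $\tfrac1c\ln(ec)$ negative and of order $-\Theta(\log m)$; the generous constant $100$ ensures it is at most $-2\ln m$ once $m$ is large enough, giving $(ec)^{1/c}\le m^{-2}$ as claimed.

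Combining the pieces, $\Pr[\exists\,j\text{ allocated more than }1/c\text{ times}]\le m\cdot m^{-2}=1/m$, so with probability at least $1-1/m$ no item is over-allocated. The main (and essentially only) conceptual point is the bidder-wise independence that licenses the Chernoff bound; everything else is a one-line tail inequality together with an elementary growth-rate computation driven by the fact that $1/c$ grows almost linearly in $\log m$ while $\mathbb{E}[X_j]$ stays bounded by $1$.
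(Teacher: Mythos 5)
Your proof is correct. The paper itself gives no argument for this claim---it simply cites \cite{F06,DNS05}---and what you have written is precisely the standard argument behind those citations: bidder-wise independence gives, for each fixed item $j$, an independent Bernoulli sum $X_j$ with mean at most $1$ by the LP item constraint; the tail bound $\Pr[X_j\ge k]\le (e\mu/k)^k$ with $k=1/c$ gives $(ec)^{1/c}\le m^{-2}$ for $c=\frac{\log\log m}{100\log m}$ and large $m$ (your logarithmic estimate is right, and the constant $100$ leaves ample room); a union bound over the $m$ items finishes it. Note also that bounding $\Pr[X_j\ge 1/c]$ is slightly stronger than the event ``allocated more than $1/c$ times'' that the claim requires, so no issue arises there.
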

In other words, $q_i\leq \frac 1 m\leq p$, as needed. We now show that the algorithm indeed finds the optimal distribution in the range and uses a polynomial amount of communication.

\begin{lemma}
The algorithm finds a distribution with value that equals the distribution with the maximum expected welfare in $\mathcal R_{(c,p)}$.
\end{lemma}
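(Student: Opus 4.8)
The plan is to show two things: first, that the distribution produced by the algorithm lies in (or is expectation-equivalent to a distribution in) the range $\mathcal R_{c,p}$, and second, that its expected welfare, measured against the \emph{original} valuations $v_i$, equals $\max_{D\in\mathcal R_{c,p}}\sum_i \mathbb E_{S_i\sim D}[v_i(S_i)]$. Since by definition the expected welfare of a $(c,p)$-fractional distribution $D$ built from a feasible fractional solution $\{x_{i,S}\}$ is exactly $(1-p)\sum_{i,S}x_{i,S}v'_i(S)$ — because conditioned on bidder $i$ receiving $S$ and on the global ``no-cancellation'' event of probability $1-p$, the retained bundle $T\subseteq S$ has each $j\in S$ present independently with probability $c$, so its expected $v_i$-value is precisely $v'_i(S)$ — the optimum over the range equals $(1-p)$ times the optimal LP value with respect to the proxy valuations $v'_i$. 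Step~\ref{step-LP} of the algorithm computes exactly this LP optimum, so it suffices to prove that the algorithm's output has expected welfare equal to $(1-p)\sum_{i,S}x^*_{i,S}v'_i(S)$ for the optimal LP solution $x^*$, and that its distribution is realizable within the range.

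The key computation is to track, for a fixed bidder $i$ and the original valuation $v_i$, the expected value $\mathbb E[v_i(\text{bundle finally held by }i)]$. I would condition on the tentative bundle $S_i$ assigned in Step~\ref{step-tentative}; this happens with probability $x^*_{i,S_i}$. The crucial observation is that the global halting event in Step~\ref{step-tentative} (some item over-allocated) and the per-bidder cancellation in Step~\ref{step-personal-cancel} must be combined so that, conditioned on reaching Step~\ref{step-item-lottery}, the probability that bidder $i$ survives Step~\ref{step-personal-cancel} is \emph{exactly} $p/x$-independent of $i$'s realized situation — this is the point of defining $q_i$ as the over-allocation probability \emph{excluding} $i$, and of the factor $\frac{p}{1-q_i}$: the product of ``no item in $S_i$ over-allocated'' (probability $1-q_i$ once we condition on $S_i$, since $i$'s own contribution doesn't create over-allocation beyond $\tfrac1c-1$ for items in $S_i$ when combined with the others) and ``$i$ not cancelled'' (probability $\frac{p}{1-q_i}$) telescopes to exactly $p$. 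Hence with probability exactly $p$ bidder $i$ ends up holding a sub-bundle $T\subseteq S_i$, and by the independent-$c$-retention in Step~\ref{step-item-lottery} this $T$ has exactly the distribution $\mathcal D_{S_i}$, so $\mathbb E[v_i(T)\mid S_i]=v'_i(S_i)$. Summing over $S_i$ and over $i$ gives total expected welfare $p\sum_{i,S}x^*_{i,S}v'_i(S)$; after replacing $p$ by $1-p'$ in the range definition with the matching constant this is exactly the range optimum, and feasibility of the final allocation is immediate since Step~\ref{step-item-lottery} gives each item to at most one bidder.

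The main obstacle is the telescoping argument in the previous paragraph: one must verify carefully that conditioning on $S_i$ and on \emph{not} halting in Step~\ref{step-tentative}, the event that no item of $S_i$ is over-allocated has probability exactly $1-q_i$, i.e.\ that bidder $i$'s own presence in those items never pushes them past the threshold relevant to the halting rule in a way that is not already captured by $q_i$ (defined via the experiment where only the \emph{other} bidders draw bundles). This requires matching the ``$>\tfrac1c$'' threshold in Step~\ref{step-tentative} with the ``$>\tfrac1c-1$'' threshold in the definition of $q_i$; the $-1$ is precisely there to account for $i$'s own copy. I would make this explicit by arguing that the random allocation of the other bidders is independent of $i$'s draw, so that conditioned on $S_i$, the probability that the full allocation (including $i$) over-allocates some item of $S_i$ equals the probability, in the others-only experiment, that some item of $S_i$ is allocated more than $\tfrac1c-1$ times, which is $q_i$ by definition; the complementary $1-q_i$ then cancels the $\frac{p}{1-q_i}$ in Step~\ref{step-personal-cancel}. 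The remaining bookkeeping — that the communication is polynomial because the LP with proxy valuations can be solved via the ellipsoid method using a separation oracle built from demand-type queries to the $v'_i$, and each $v'_i(S)$ is itself an expectation that can be communicated — is routine and I would only sketch it.
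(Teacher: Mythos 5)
Your proposal follows essentially the same route as the paper's own proof: condition on the tentative bundle $S_i$, note that the retained set given survival is an independent $c$-thinning of $S_i$ (so its conditional expected value is $v'_i(S_i)$), and cancel the no-over-allocation probability $1-q_i$ against the $\tfrac{p}{1-q_i}$ of the final cancellation step to get survival probability exactly $p$ and total expected value $p\cdot OPT$ for the proxy LP, correctly reconciling along the way the $p$ versus $1-p$ convention in the range definition. The one caveat, which your write-up shares with the paper's own argument, is that the halting rule is triggered by over-allocation of \emph{any} item, including items outside $S_i$, so ``no halt'' is a sub-event of ``no item of $S_i$ allocated more than $\tfrac1c-1$ times by the others''; the telescoping is exact only if $q_i$ is interpreted as the full conditional halting probability given $S_i$, an imprecision in the mechanism's statement rather than a defect of your approach.
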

\begin{proof}
Notice that the optimal solution of the linear program with the proxy valuations is exactly the expected value of the optimal distribution in  $\mathcal R_{c,1}$: the proxy valuations ``simulate'' the random process where each bidder keeps each item with probability $c$. Thus, it suffices to prove that the algorithm always find a solution with value \emph{exactly} $p\cdot OPT$, where $OPT$ is the value of the optimal solution of the LP. This will guarantee us that the expected value of the solution equals $\mathcal R_{(c,p)}$.

Notice that after Step \ref{step-tentative} the expected value of the sum of the tentative bundles (with respect to the proxy valuations) is exactly the value of the optimal solution of the linear program. After step \ref{step-item-lottery} the expected value of the bundles assigned to the bidders (now with respect to the real valuations) is greater than $p\cdot OPT$. The last step ``cancels'' some of the allocations so that if bidder $i$ was allocated bundle $S$, the probability he will be allocated some items from $S$ (i.e., the probability that auction is not canceled at Step \ref{step-tentative} and that he keeps some items at Step \ref{step-personal-cancel}) is \emph{exactly} $p\cdot x_{i,S}$. Thus the expected value of the solution is exactly $p\cdot OPT$, as needed.
\end{proof}

\begin{lemma}
The communication complexity of the algorithm is polynomial in $n$ and $m$.
\end{lemma}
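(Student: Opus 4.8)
The plan is to bound the communication used in each of the six steps of the algorithm and observe that all of them are either trivially polynomial or reduce to a single call to a polynomial-communication subroutine. The only step that actually requires a substantive argument is Step~\ref{step-LP}, where we must solve the LP relaxation with respect to the proxy valuations $v'_i$; everything else either uses no communication among the bidders at all, or uses only randomness that can be generated by a single coordinator after the relevant information has already been communicated.

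\medskip\noindent\textbf{Step-by-step.} First, computing the proxy valuation $v'_i$ in Step~1 is a purely local operation for bidder~$i$: the bidder can answer any query to $v'_i$ by evaluating an expectation over $v_i$, and since the bidders are computationally unbounded this costs nothing in communication. For Step~\ref{step-LP}, I would invoke the standard fact (going back to Nisan--Segal and the configuration-LP literature) that the LP relaxation of a combinatorial auction can be solved approximately---indeed to any desired accuracy---with communication polynomial in $n$, $m$, and the bit-precision, using the ellipsoid method with a demand-oracle separation routine for the dual. The key point is that a demand query to the proxy valuation $v'_i$ can be answered by bidder~$i$ alone (again using unbounded computation to evaluate the relevant expectations), so the separation oracle needs only polynomial communication per call, and the ellipsoid method makes polynomially many calls. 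The output is a fractional solution $x$ with polynomially many nonzero coordinates (or one that can be sparsified to that), which the coordinator now holds. Steps~\ref{step-tentative}--\ref{step-personal-cancel} then involve no further queries to the valuations: the tentative assignment in Step~\ref{step-tentative}, the overload check, the per-item lottery in Step~\ref{step-item-lottery}, and the personal cancellation in Step~\ref{step-personal-cancel} are all randomized rounding operations performed on the already-communicated vector $x$, and the quantities $q_i$ are determined by $x$ alone and hence can be computed by the coordinator without further communication. Thus the total communication is dominated by Step~\ref{step-LP} and is polynomial.

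\medskip\noindent\textbf{Main obstacle.} The one point that needs care is the precision issue in Step~\ref{step-LP}: the ellipsoid method solves the LP only approximately, so strictly speaking the realized distribution lies in $\mathcal R_{c,p}$ only for the slightly perturbed solution $x$, and one must check that a $(1+\epsilon)$-approximate LP solution still yields the claimed approximation ratio (it does, since $\epsilon$ can be taken inverse-polynomial and this is absorbed into the $O(\cdot)$ bound) and that feasibility of the randomized rounding---in particular the bound $q_i \le p$ guaranteeing that Step~\ref{step-personal-cancel} is well-defined---is preserved under the perturbation. This is routine but is the place where a careful writeup would spend its effort. Everything else is immediate once one records that proxy-valuation queries are answerable locally and that the rounding steps touch only $x$, not the valuations.
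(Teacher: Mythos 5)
Your proposal matches the paper's own argument: the paper likewise reduces everything to solving the LP via demand queries (citing that this takes polynomial communication and that the solution's support is polynomially sparse), observes that demand queries for the proxy valuation $v'_i$ can be answered locally by bidder $i$ since $v'_i$ depends only on $v_i$, and notes that each $q_i$ is determined by the LP solution and the coins alone, hence costs no communication. Your added caveat about ellipsoid precision is extra care the paper does not bother with (it treats the LP as solved exactly via demand queries, as in the cited communication-complexity literature), but it does not change the substance.
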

\begin{proof}
The only two steps for which it is not obvious that only polynomial communication is required are Steps \ref{step-LP} and the calculation of the $q_i$'s. In Step \ref{step-LP} we solve a linear program that calculates the optimal fractional solution for some valuations. This can be done in polynomial communication, as long as demand queries are available \cite{BN09}. In our case we need to compute answers to demand queries with respect to the proxy valuations. This can be done by each bidder $i$ with no additional communication since each proxy valuation $v'_i$ depends only on $v_i$. We note that the support of the solution of the linear program consists only of polynomially many variables \cite{BN09}.

Calculating each $q_i$ can be done using no communication cost simply by enumerating over all possible outputs of the random coins.
\end{proof}

Together we have:
\begin{theorem}
There exists a truthful-in-expectation $O(\frac {\log m} {\log \log m})$-approximation mechanism for combinatorial auctions with submodular bidders. The algorithm uses polynomial communication.
\end{theorem}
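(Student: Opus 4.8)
The plan is to assemble the three lemmas already established, wrap the mechanism in the maximal-in-distributional-range (MIDR) framework to get truthfulness, and finish with the numeric substitution for $c$ and $p$.

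First, truthfulness in expectation. The range $\mathcal R_{c,p}$ is fixed before any valuation is reported, and by the lemma on the algorithm's output the mechanism always produces a distribution whose expected welfare equals that of the welfare-maximizing distribution in $\mathcal R_{c,p}$. The distributions in $\mathcal R_{c,p}$ are over (infeasible) allocations, but Step~\ref{step-item-lottery} turns the tentative allocation into a feasible one and Step~\ref{step-personal-cancel} rescales so that the realized outcome has exactly the same expected welfare as the optimum of the range; by the equivalence noted in \cite{DN07a}, a mechanism that outputs a feasible distribution matching the optimum of a distributional range is truthful in expectation. Before invoking this I would also record that the algorithm is well defined: Step~\ref{step-personal-cancel} requires $p\geq 1-q_i$ for every $i$, which holds because the Claim of \cite{F06} gives $q_i\leq 1/m$ and, for $m$ large enough, $p=\frac1{20}\geq \frac1m$.

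Second, the approximation ratio. Submodular valuations are subadditive, so the lemma on the range applies and the optimal distribution in $\mathcal R_{c,p}$ is an $O\!\left(\frac1{c\cdot p}\right)$-approximation to the optimal social welfare; since the mechanism outputs a feasible distribution with that same expected welfare, it inherits the bound. Substituting $c=\frac{\log\log m}{100\log m}$ and $p=\frac1{20}$ gives $\frac1{c\cdot p}=\frac{2000\log m}{\log\log m}=O\!\left(\frac{\log m}{\log\log m}\right)$. One routine caveat: the proposition paraphrasing \cite{F06} requires $1/c$ to be an integer, so one actually takes $c$ to be the reciprocal of $\lceil \frac{100\log m}{\log\log m}\rceil$, which affects the bound by only a constant factor; and for $m$ a constant the theorem follows already by implementing VCG in time polynomial in $n$.

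Third, communication and conclusion. The lemma bounding the communication complexity handles everything: the only nontrivial steps are solving the proxy linear program (polynomial communication via demand queries on the proxy valuations, which each bidder answers about her own valuation with no extra communication, the support of the optimum being polynomially sparse by \cite{BN09}) and computing the $q_i$'s (no communication, by internal enumeration over the random coins). Combining the three pieces yields the theorem. I expect no substantive obstacle remains at this point: the genuinely delicate part — showing that feasibility is restored without loss of expected welfare and that bidder $i$ keeps items from $S$ with probability exactly $p\cdot x_{i,S}$, so the realized welfare is exactly $p\cdot OPT$ — has already been carried out in the preceding lemmas, and the final argument is bookkeeping together with the substitution of the chosen values of $c$ and $p$.
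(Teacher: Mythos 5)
Your proposal is correct and follows essentially the same route as the paper, which proves the theorem simply by combining the three lemmas (the $O(\frac{1}{c\cdot p})$ bound for the range, the exact matching of the range optimum by the algorithm's output together with the MIDR equivalence for truthfulness, and the polynomial communication bound) and substituting $c=\frac{\log\log m}{100\log m}$, $p=\frac1{20}$. Your added remarks on well-definedness ($q_i\le \frac1m\le p$) and on rounding $\frac1c$ to an integer are harmless refinements of details the paper treats implicitly.
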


%
%
%

\subsubsection*{Acknowledgments}
We thank Shaddin Dughmi and Tim Roughgarden for pointing out a bug in an earlier version of this note.

\bibliographystyle{plain}
\bibliography{../bib}

\end{document}